\newtheorem{remark}{\emph {Remark}}
\newtheorem{theorem}{\emph {Theorem}}
\begin{document}
\title{Improving THz Coverage for 6G URLLC Services via Exploiting Mobile Computing}
\author{
{Sha Xie, Lingxiang Li, Zhi Chen, and Shaoqian Li} \\
National Key Laboratory of Science and Technology on Communications, \\
University of Electronic Science and Technology of China, Chengdu 611731, China\\
}

\maketitle \pagestyle{empty} \thispagestyle{empty}
\maketitle
\begin{abstract}
Terahertz (THz) communication (0.1-10 THz) is regarded as a promising technology, which provides rich available bandwidth and high data rate of terahertz bit per second (Tbps). However, THz signals suffer from high path loss, which profoundly decreases the transmission distance. To improve THz coverage, we consider the aid of mobile computing. Specifically, job offloading decision in mobile computing and frequency allocation in communication are co-designed to maximize distance and concurrently support ultra-reliable low-latency communications (URLLC) services for the sixth-generation (6G) mobile communication. Further, the above optimization problem is non-convex, then an effective and low-complexity method is proposed via exploiting the special structure of this problem. Finally, numerical results verify the effectiveness of our work.

\end{abstract}

\begin{IEEEkeywords}
Terahertz communication, mobile computing, ultra-reliable low-latency communications, coverage
\end{IEEEkeywords}


\section{Introduction}
With the development of smart terminals and emerging applications, wireless data traffic has increased dramatically. The fifth-generation (5G) mobile communication cannot meet the requirements of intelligent and automated systems 10 years later \cite{6G2}. As a result, the sixth-generation (6G) mobile communication has been attracting a lot of attentions. 6G is envisioned to utilize Terahertz (THz) band, in which the available bandwidth is theoretically three orders of magnitude wider than that in the millimeter wave (mmWave) band \cite{liang1}\cite{liang2}.
THz communication can provide high data rate of terahertz bit per second (Tbps), making that every device access high-speed applications smoothly \cite{THz1}\cite{THz3}. However, due to high path loss and molecular absorption attenuation of THz wave propagation, THz communication coverage is usually very limited, which poses the main challenges for THz communications \cite{han}.

Currently, there are many works aiming to improve THz communication coverage. For multi-antenna scenario, beamforming technique is considered to get high directional gains of antenna arrays. Due to the use of massive antennas, these kind of methods usually require high energy consumption and thus limiting the application of THz communications \cite{green1}\cite{green2}. For single-antenna scenario, the works \cite{han} and \cite{han1} tried to improve THz transmission distance by optimizing spectrum, transmit power, modulations and the number of frames. The work \cite{coverage1} derived new expressions for coverage probability of downlink transmissions, where the coverage probability is defined as the probability that SINR is larger than a predefined threshold. While the works \cite{han} and \cite{han1} captured the distance-frequency-dependent characteristics of THz channels, for which they maximized the achievable transmission distance of users subject to the constraints that the achievable communication data rate associated with users is no lower than a predefined common threshold.

However, existing works try to improve THz communication coverage without taking into account the attributes of applications as well as mobile devices' ability. In the future we will have a wide range of applications such as autonomous vehicles, AR/VR, facial recognition, etc \cite{6G3}. These emerging applications require ultra-reliable low-latency communications (URLLC) services and involve transmissions of long-packets. Research on URLLC is extensively done in 5G, but only for short-packet transmissions, where the transmission delay is simply approximated as one time slot \cite{10}. This approximation, however, is not suitable for long-packet transmissions anymore since long-packet transmissions cannot be completed in one time slot. As such, new metrics as well as math descriptions is needed to characterize the URLLC constraints for long-packet transmissions for applications such as VR/AR in the future 6G networks. On the other hand, with the development of semiconductor industry the mobile devices are becoming more and more powerful, enabling local computing of partial or whole jobs at terminals, which is also referred to as mobile computing. This mobile computing, if properly used, can help to improve the THz communication coverage. For example, when users offload all the jobs to the access point for processing, the communication burden is heavy, resulting in limited THz communication coverage. By comparison, if users can process part of these jobs, the communication burden is relieved and the communication data rate required is also reduced, which in turn helps to improve the THz communication coverage.

In this paper, we consider the same multi-user scenario as in \cite{han}. In contrast to the coarse-grained resource plans where each user is allocated with a predefined common communication data rate, a more fine-grained resource scheme is introduced to make full use of the abilities across all users, not only the communication ability but also the computing ability. To the best of our knowledge, this is the first time to improve THz coverage via exploiting mobile computing.
Our main contributions can be summarized as follows.
\begin{itemize}
  \item We investigate the relationship between the communication data rate, the transmission distance and the carrier frequency in the THz Band, obtaining in closed-form the expression of transmission distance with respect to the carrier frequency and the rate via fitting method.
  \item We develop a novel reliability framework to characterize URLLC constraints for long-packet transmissions, where the reliability is defined as the probability that end-to-end (E2E) delay remains below a certain predefined threshold. The E2E delay equals a weighted sum of local computing delay, communication delay and edge computing delay.
  \item With the proposed framework for long-packet transmissions with URLLC constraints, we formulate an achievable sum communication distance maximization problem with respect to the optimization of carrier frequency allocations and job offloading decisions across all users, thus improving the THz communication coverage.
  \item The achievable sum communication distance maximization problem is non-convex. To solve that issue, we first do problem reformulation via exploiting the special structure of the problem. An effective and low-complexity method is then proposed by decomposing the original problem into two sub-problems, i.e., the data rate threshold minimization problem and the sum communication distance maximization problem.
\end{itemize}

\vspace{-2pt}
\section{System Model and Problem Formulation}
\vspace{-3pt}
We consider $K$ mobile users and an edge server in a mobile edge system (see Fig.\ref{system}). On the one hand, when the mobile user does not have any local computation capabilities, all tasks must be offloaded to the edge server through a wireless communication link. This undoubtedly increases transmission overhead, and brings a short communication distance and a high transmission delay. On the other hand, with the development of semiconductor industry the mobile devices are becoming more and more powerful, enabling local
computing of partial or whole jobs at terminals, which can help to improve the THz communication coverage. In light of this, we consider each user has local computation capability.

\vspace{-5pt}
\begin{figure}[H]
\begin{center}
            \includegraphics[scale=0.415]{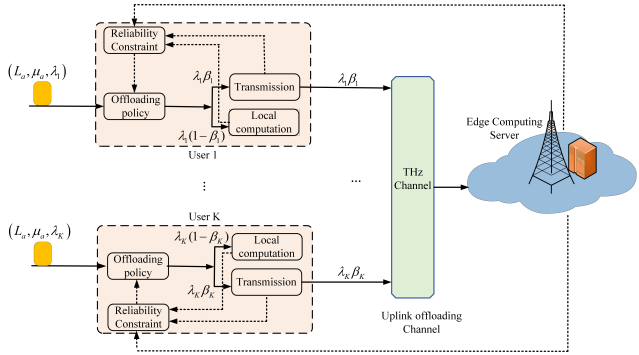}
           \caption{A multi-user mobile computing system.}
            \label{system}
            \end{center}
        \end{figure}
\vspace{-18pt}
For simplicity, each computation task is characterized by a triple of parameters $< {L_a},{\mu _a},\lambda_k > ,k = 1,2,...,K$, where ${L_a}$ (in bits) is the average input data size per job, ${\mu _a}$ (in cycles) denotes the average CPU cycles needed by completing a job, and $\lambda_k$ (in jobs/s) represents the job arrival rate at the $k$-user. Both the input data size and the CPU cycles required to complete a job follow exponential distributions.

In the following, the transmission data in the THz communication model is first introduced in Sec.II-A, then the reliability framework to characterize URLLC constraints is given in the computing model (see Sec.II-B). Finally, in Sec.II-C we present a distance maximization problem by jointly designing offloading decision and frequency allocation under 6G URLLC, computation and communication constraints.

\vspace{-9pt}
\subsection{THz Communication Model}
\vspace{-6pt}
In the THz Band, the transmitted signals experience severe molecular absorption loss and free-space loss \cite{channel}. Specifically, based on Recommendation ITU-R P.676-9 \cite{ITU-R}, the specific gaseous attenuation is given by
\vspace{-5pt}
\begin{align}
\gamma (\tilde f){\rm{ = }} 0.1820 \tilde f \cdot {N^{''}}(\tilde f)\quad \rm{dB/km{\rm{}}},\label{equ:ITU}
\end{align}
\vspace{-1.5pt}
\!\!where $\tilde f$ represents the frequency, and ${N^{''}}(\tilde f)$ is the imaginary part of the frequency-dependent complex refractivity. In consequence, combined with communication distance $d$, the molecular absorption loss $L_{abs}{(\tilde f,d)_{[dB]}}$ in dB in the THz Band can be described as the product of $\gamma (\tilde f)$ and $d$:
\vspace{-5pt}
\begin{align}
L_{abs}{(\tilde f,d)_{[dB]}} = \gamma (\tilde f) \cdot d = 0.1820 \tilde f \cdot {N^{''}}(\tilde f) \cdot d{\rm{ }}.\label{equ:fit}
\end{align}
\vspace{-1.5pt}
Additionally, the free-space loss can be written in dB as
\vspace{-5pt}
\begin{align}
{L_{spread}}{(\tilde f,d)_{[dB]}} = 20{\log _{10}}(\tfrac{{4\pi \tilde fd}}{c}),
\end{align}
\vspace{-1.5pt}
\!\!where $c = 3 \times {10^8}$ m/s stands for the speed of light. Therefore, the total path loss ${L}{(\tilde f,d)_{[dB]}}$ can be given by
\begin{align}
{L}{(\tilde f,d)_{[dB]}} &= {L_{abs}}{(\tilde f,d)_{[dB]}} + {L_{spread}}{(\tilde f,d)_{[dB]}} \nonumber\\
                     &= 0.1820{N^{''}}(\tilde f) \cdot \tilde f \cdot d{\rm{ + }}20{\log _{10}}(\tfrac{{4\pi \tilde fd}}{c}).
                     \label{equ:L}
\end{align}
The achievable data rate ${R_k}$ of the $k$-th user is described as
\begin{align}\label{equ:rate}
\quad{R_k}&= {B_k}\cdot {\log _2}\left( {1 + \tfrac{{{p_k}{G_t}{G_r}}}{{{\sigma ^2} \cdot {{10}^{\frac{{{L}{{({\tilde f_k},{d_k})}_{[dB]}}}}{{10}}}}}}} \right),
\end{align}
where $B_k$ represents the available communication bandwidth of the $k$-th user, ${G_t}$/${G_r}$ stands for the transmitting/receiving antenna gain, and ${\sigma ^2}$ refers to the noise power.
\vspace{-9pt}
\subsection{Computing Model}
\vspace{-6pt}
We consider the same computing model as in our previous work \cite{precious}. As each mobile user has local computation power, user can choose to compute its job locally or offload this job to edge server. Therefore, computing model consists of local computing model and edge computing model. Moveover, in the local computing model there is only one phase, i.e., local job execution, while the edge computing model includes two phases, i.e., job offloading and edge job execution. Therefore, the system-level reliability framework should be developed based on these three phases. In the following, we introduce the reliability framework.
For more details, please refer to \cite{precious}.
\subsubsection{Reliability of the system ${\Phi _k}(\varepsilon )$}

\

\noindent  \ \
We define the reliability of the system as the probability that the total delay $T_k$ remains below a certain pre-defined threshold $\varepsilon $ for the $k$-th user.
Here, a small threshold $\varepsilon $ meets the low-latency requirement while a high probability value of $\mbox{Pr}\{T_k \le \varepsilon\}$ meets the
high-reliability requirement. Let ${\beta _k}$ be the offloading probability of user $k$, local computing probability thus is $1-{\beta _k}$. Taking into account the binary computing choices,
the reliability of user $k$ is given by
\begin{equation}
{\Phi _k}(\varepsilon )\triangleq \mbox{Pr}\{T_k \le \varepsilon\} \label{eqReliability}
={\rm{(1}} - {\beta _k}{\rm{)}}\Phi _{l,k}(\varepsilon ) + {\beta _k}\Phi _{m,k}(\varepsilon ),
\end{equation}
where $\Phi _{l,k}(\varepsilon) $ and $\Phi _{m,k}(\varepsilon)$ refer to the reliabilities of local computing and edge computing, respectively. In what follows, we discuss $\Phi _{l,k}(\varepsilon) $ and $\Phi _{m,k}(\varepsilon)$, respectively.

\subsubsection{Reliability of the local computing $\Phi _{l,k}(\varepsilon) $}

\

\noindent  \ \
The service time of user $k$ follows an exponential distribution with rate parameter $\mu _{l,k}={{f_{l,k}}}/{{{\mu _a}}}$, where $f_{l,k}$ is the CPU-cycle frequency of the $k$-th user.
Therefore, we obtain
\vspace{-4pt}
\begin{equation}
\Phi _{l,k}(\varepsilon) =\mbox{Pr}{\rm{(T}}_{l,k} \le \varepsilon)=1 - {e^{ - {\rm{(}}\mu _{l,k} - (1 - {\beta _k}){\lambda _k}{\rm{)}}\varepsilon}},
\end{equation}
\vspace{-2pt}
\!\!where ${\rm{T}}_{l,k}$ denotes the total delay for local computing and $(1 - {\beta _k}){\lambda _k} < \mu _{l,k}$.

\subsubsection{Reliability of the edge computing $\Phi _{m,k}(\varepsilon)$}
\

\noindent  \ \
Let ${\rm{T}}_{m,k}$, $t_1$ and $t_2$ denote the total delay, the transmission delay and the job execution time for edge computing, respectively. We arrive at
\vspace{-4pt}
\begin{equation}
\Phi _{m,k}(\varepsilon ) = \mbox{Pr}{\rm{\{T}}_{m,k} \le \varepsilon {\rm{\}}}
= \mbox{Pr}\{t_1 \le \varepsilon\} * \mbox{Pr}\{t_2 = \varepsilon\},
\end{equation}
\vspace{-0.5pt}
\!\!where $*$ represents convolution operator.
 $\mbox{Pr}\{t_1 \le \varepsilon\}=1 - {e^{ - {\rm{(}}\frac{{{R_k}}}{{{L_a}}} - {\beta _k}{\lambda _k}{\rm{)}}\varepsilon}}$,
where ${\beta _k}{\lambda _k} < {{{R_k}}}/{{{L_a}}}$.
In addition, $\mbox{Pr}\{t_2 = \varepsilon\} = ( \mu _m - {\beta _k}{\lambda _k}{\rm{)}}{e^{ - {\rm{(}}\mu_m - {\beta _k}{\lambda _k}{\rm{)}}\varepsilon}}$,
where $\mu _m={{{f_m}}}/{{{\mu _a}}}$ is the service rate of the edge server, $f_m$ is the CPU-cycle frequency of the edge server, and $\sum\nolimits_{k = 1}^K {{\beta _k}{\lambda _k}}  < {\mu _m}$.

\vspace{-4pt}
\subsection{Problem Formulation}
\vspace{-5pt}
In this paper, we aim to enhance THz coverage aided by mobile computing. Let us focus on the relationship between communication and computing. On the one hand, quality of wireless transmission link as well as local/edge computation power has a significant impact on task offloading.
On the other hand, communication resource (e.g., frequency, available bandwidth and transmission power) and offloading policy affect achievable communication distance. Motivated by these observations, we consider the joint design of communication and computing to maximize coverage while providing 6G URLLC services. Besides, due to the very high frequency-selectivity of the THz channel, from the perspective of communication, we mainly focus on the impact of frequency on the joint design. Thus, we assume each user's bandwidth and transmission power are the same, while job arrival rate and computation capacity are different. Then, offloading policy and frequency allocation for multiple users are formulated as an optimization problem, where the objective is to maximize distance under the constraints of URLLC, computation and frequency resources.
Specifically, the optimization problem is:
\vspace{-14pt}
\begin{subequations}
\begin{align}
\mbox{(P1)}\,&\max\limits_{{\bf{d}},{\boldsymbol{\beta} },{\bf{\tilde f}}} {\rm{ }}\sum\nolimits_{k = 1}^K {{d_k}}   \nonumber \\
&\ {{\rm{s.t.}}}   \, \    {\rm{        (1}} - {\beta _k}{\rm{)}}{\Phi _{l,k}}(\varepsilon) + {\beta _k}{\Phi _{m,k}}(\varepsilon) \ge {\vartheta _{th}}, \forall k, \label{equ:1a} \\
&\qquad\ {\rm{        (1}} - {\beta _k}{\rm{)}}{\lambda _k} < {\mu _{l,k}}{\rm{, }}\forall k ,\label{equ:1b}\\
&\qquad\ {\rm{        }}{\beta _k}{\lambda _k} < \tfrac{{{R_k}}}{{{L_a}}}{\rm{, }}\forall k,\label{equ:1c} \\
&\qquad\ {\rm{        }}\sum\nolimits_{k = 1}^K {{\beta _k}{\lambda _k}}  < {\mu _m}{\rm{, }}\label{equ:1d}\\
&\qquad\ {\rm{        0}} \le {\beta _k} \le {\rm{1, }}\forall k,\label{equ:1e}\\
               & \qquad\ {\tilde f_k} \in {{\bf{\tilde f}}_0},{\tilde f_k} \ne {\tilde f_j},k \ne j,\forall k,j \label{equ:1f},
\end{align}
\end{subequations}
where ${\boldsymbol{\beta} } \!\!\!= \!\!\!\{{\beta _1},...{\beta _k},...,{\beta _K}\!\}$, ${\bf{\tilde f}}\!\!\! =\!\!\! \{{\tilde f_1},...{\tilde f_k},...,{\tilde f_K}\!\}$ and ${\bf{d}} \!\!\!= \!\!\!\{{d_1},...{d_k},...,{d_K}\!\}$ denote the sets of offloading policies, frequencies, and achievable distances, respectively. ${\vartheta _{th}}$ is a certain pre-defined reliability threshold, and ${{\bf{\tilde f}}_0}$ indicates the set of the $K$ valid frequencies with ${{\bf{\tilde f}}_0} \!\!=\! \!\{ {\tilde f_0^1,...\tilde f_0^k,...,\tilde f_0^K} \!\}$. (\ref{equ:1a}) represents the reliability constraint. (\ref{equ:1b}), (\ref{equ:1c}), and (\ref{equ:1d}) guarantee that the three queues in local computing, job offloading and job execution are stable, respectively. Finally, (\ref{equ:1e}) defines the feasible region of the offloading probability, and (\ref{equ:1f}) shows that the frequency occupied by each user is different.

It can be noted that the solution of this optimization problem contains the following three difficulties:
\vspace{-2pt}
\begin{enumerate}
  \item As ${N^{''}}(\tilde f)$ in (\ref{equ:fit}) is based on the spectroscopic data in Table 1 and Table 2 of \cite{ITU-R}, the closed-form expression of $\gamma $ with respect to $\tilde f$ cannot be obtained easily.
  \item The reliability constraint includes convolution operator, which makes the problem solution more difficult.
  \item This problem is non-convex, which cannot be directly solved by standard convex optimization techniques.
\end{enumerate}
\vspace{-2pt}

In the following, we first handle the first two issues, then propose an effective and low-complexity method to solve (P1).

\vspace{-5pt}
\section{Multi-User Computing Offloading Policies and Communication Frequency Allocation}
\vspace{-4pt}
In this section, we aim to address the above three difficulties and derive the optimal offloading decisions and frequency allocation. Towards these goals, in Sec.III-A, we obtain the closed-form expression of total path loss with respect to frequency via fitting method, which addresses the first difficulty. Based on this, we obtain the rate expression corresponding to the fitted total path loss. Next, in Sec.III-B, we reformulate the problem by completing convolution operation, analyzing the relationships between data rate constraints, and deriving the closed-form distance expression, which addresses the second difficulty. Finally, in Sec.III-C we propose an effective and low-complexity method to obtain the optimal solutions.
\vspace{-5pt}
\subsection{Total path loss fitting function}
\vspace{-4pt}
According to (\ref{equ:rate}), frequency affects communication rate, thus further affects achievable communication distance. Therefore, to improve coverage in terms of frequency allocation, it is necessary to find out the relationships between data rate, frequency and distance. Owing to the first difficulty, we consider to fit $\gamma $, thereby obtaining approximate closed-form expression of total path loss. Due to limitations of space, we only show the fitted specific gaseous attenuation $\gamma'$, which is described as the superposition of multiple Gaussian functions:
\begin{align}
\gamma'(\tilde f) = \sum\nolimits_{i = 1}^7 {{a_i}{e^{ - {{(\frac{{{\tilde f} - {b_i}}}{{{c_i}}})}^2}}}},
\label{equ:fitted}
\end{align}
where $100 \le \tilde f \le 1000$ (in units GHz). All these parameters are shown in Table I.

Substituting (\ref{equ:fitted}) into (\ref{equ:L}), we derive
\begin{align}
{L'}{(\tilde f,d)_{[dB]}} 
                     &=  \sum\nolimits_{i = 1}^7 {{a_i}{e^{ - {{(\frac{{{\tilde f} - {b_i}}}{{{c_i}}})}^2}}}} \cdot d{\rm{ + }}20{\log _{10}}(\tfrac{{4\pi \tilde fd}}{c}).\!
\end{align}
Therefore, the data rate in (\ref{equ:rate}) can be further written as
\begin{align}\label{equ:rate2}
R_k={B_k}\cdot{\log _2}\left( {1 + \tfrac{{{p_k}{G_t}{G_r}}}{{{\sigma ^2} \cdot {{10}^{\frac{{{\gamma '}(\tilde f_k) \cdot {d_k}{\rm{ + }}20{{\log }_{10}}(\frac{{4\pi {\tilde f_k}{d_k}}}{c}){\rm{ }}}}{{10}}}}}}} \right).\!
\end{align}

{\begin{remark}
This fitting method can be applied to another THz path loss expression \cite{channel},
${L}(\tilde f,d)={e^{{\kappa _{abs}}(\tilde f)d}}  {{\rm{(}}{{4\pi \tilde fd}}/{c}{\rm{)}}^2}$,
where ${\kappa _{abs}}$ is the medium absorption coefficient.
\end{remark}}

\vspace{-15pt}
\begin{table}[H]
\setlength{\abovecaptionskip}{-0.05cm}
\caption{All Parameter Settings in $\gamma'$}
\small
\centering
\setlength{\tabcolsep}{0.15mm}{
\begin{tabular}{m{1cm}<{\centering}m{1.6cm}<{\centering}|m{1.1cm}<{\centering}m{1.5cm}<{\centering}|m{1.1cm}<{\centering}m{1.3cm}<{\centering}}
\toprule
Parameter    & Value                       & Parameter    & Value                          & Parameter    & Value \\[-2.5pt]
\midrule
$a_1$        & $9906$                      & $b_1$        & $557$                          & $c_1$        & $3.175$ \\
$a_2$        & $9940$                      & $b_2$        & $752.1$                        & $c_2$        & $4.968$ \\
$a_3$        & $7301$                      & $b_3$        & $987.9$                        & $c_3$        & $4.6$   \\
$a_4$        & $5667$                      & $b_4$        & $556.5$                        & $c_4$        & $8.772$ \\
$a_5$        & $542.2$                     & $b_5$        & $559.1$                        & $c_5$        & $33.58$ \\
$a_6$        & $3.338 \times {10^{15}}$    & $b_6$        & $1.46 \times {10^{4}}$         & $c_6$        & $2496$  \\
$a_7$        & $208.2$                     & $b_7$        & $447.7$                        & $c_7$        & $6.968$ \\
\bottomrule
\end{tabular}}
\end{table}
\vspace{-15pt}

\subsection{Problem Reformulation}
In this subsection, we aim to reformulate (P1). We first complete the convolution operation in (\ref{equ:1a}) to derive the corresponding relationship between $R_k$ and $\beta_k$ for the $k$-th user, $\forall k$. Then combined with (\ref{equ:1c}), we can simplify (P1).
Specifically, we let ${u_k} = \frac{{{R_k}}}{{{L_a}}} - {\beta _k}{\lambda _k} > 0$, ${v_k} = {\mu _m} - {\beta _k}{\lambda _k} > 0$, $q_k(\varepsilon ) = 1 - {e^{ - {u_k}\varepsilon }}$ and $g_k(\varepsilon ) = {v_k}{e^{ - {v_k}\varepsilon }}$. As the delay threshold is positive, $q_k(\varepsilon )$ and $g_k(\varepsilon )$ can be further written as $q_k(\varepsilon ) = (1 - {e^{ - {u_k}\varepsilon }}) \cdot h(\varepsilon )$ and $g_k(\varepsilon ) = {v_k}{e^{ - {v_k}\varepsilon }}\cdot h(\varepsilon )$, respectively, where $h(\varepsilon )$ is unit step function. Then we obtain
\vspace{-0.3cm}
\begin{align}\label{equ:juanji}
{\Phi _{m,k}}(\varepsilon ) &= q_k(\varepsilon ) * g_k(\varepsilon ) \notag\\
                &= \int_0^\varepsilon  {(1 - {e^{ - {u_k}\tau }}){v_k}{e^{ - {v_k}(\varepsilon  - \tau )}}d\tau } \notag \notag\\
             \!  &\!\!\!\!\!\!\mathop {\rm{ = }}\limits^{({u_k} \ne {v_k})} {\rm{1}} - {e^{ - {v_k} \varepsilon}} + \frac{{v_k}}{{{v_k} - {u_k}}}({e^{ - {v_k}\varepsilon}} \!-\! {e^{ - {u_k}\varepsilon}}).\!
\end{align}

By far, we have finished convolution operation. Then, we simplify (\ref{equ:1a}).
Substituting (\ref{equ:juanji}) into (\ref{equ:1a}), we arrive at
\begin{align}\label{equ:reliability}
{\rm{                         }}\frac{{{e^{({v_k} \!-\! {u_k})\varepsilon }}\! \!-\!\! 1}}{{{v_k} \!-\! {u_k}}}{\rm{ }}\! \le \! \!- \frac{{{e^{{v_k}\varepsilon }}}}{{v_k}}(\!\frac{{{\vartheta _{th}} \!- \! {\rm{(1}} \!\!-\!\! \beta_k {\rm{)}}{\Phi _{l,k}}(\varepsilon)}}{\beta_k } + {e^{ - {v_k}\varepsilon }} \!- \! 1).\!\!
\end{align}
This, combined with $\Lambda_k \!=  - \frac{{{e^{{v_k}\varepsilon }}}}{{v_k}}(\frac{{{\vartheta _{th}}\! - {\rm{(1}} \!- \!\beta_k {\rm{)}}{\Phi _{l,k}}(\varepsilon)}}{\beta_k } + {e^{ - {v_k}\varepsilon }} \!-\! 1)$, $x_k = {v_k} - {u_k} < 0$, indicates that
\begin{align}
- \frac{\varepsilon }{\Lambda_k}(\Lambda_k x_k + 1){e^{ - (\frac{\varepsilon }{\Lambda_k}(\Lambda_k x_k + 1))}} &\ge  - \frac{\varepsilon }{\Lambda_k}{e^{ - \frac{\varepsilon }{\Lambda_k}}}\notag\\
  \Rightarrow - \frac{\varepsilon }{\Lambda_k}(\Lambda_k x_k + 1)& \ge W( - \frac{\varepsilon }{\Lambda_k}{e^{ - \frac{\varepsilon }{\Lambda_k}}})\notag\\
 \Rightarrow {R_k} & \ge {R_{k,th}},
\label{equ:new}
\end{align}
where $W( \cdot )$ refers to Lambert function, and
\begin{equation}
{R_{k,th}} = \left( {{\mu _m} + \tfrac{{\frac{\Lambda_k }{\varepsilon }W( - \frac{\varepsilon }{\Lambda_k }{e^{ - \frac{\varepsilon }{\Lambda_k }}}) + 1}}{\Lambda_k }} \right){L_a}.
\end{equation}

For ease of exposition, we called ${R_{k,th}}$ as the data rate threshold of the $k$-user.
Now, combined with constraints (\ref{equ:new}) and (\ref{equ:1c}), we further obtain the range of ${R_k}$ as follows,
\begin{align}
\left\{ {\begin{array}{*{20}{c}}
{\!\!\!\!\!\!{R_k} \ge {R_{k,th}}}\\
{{R_k} > {\beta _k}{\lambda _k}{L_a}}
\end{array}} \right. \forall k.
\label{equ:compare}
\end{align}

Let us focus on ${R_{k,th}}$ and ${\beta _k}{\lambda _k}{L_a}$, which can help simplify (\ref{equ:compare}). In general, ${R_{k,th}}>{\beta _k}{\lambda _k}{L_a}$ can always be satisfied considering all the parameter values in our communication model and computing model. As a result, (\ref{equ:1c}) can be omitted.

Moreover, according to the instantaneous data rate $R_k$ in (\ref{equ:rate2}), the distance can be calculated as follows,
\begin{align}
{d_k}{\rm{ = }}\frac{{20}}{{\gamma'({\tilde f_k}) \!\cdot\! \ln 10}}W\left\{ {\frac{{\gamma'({\tilde f_k}) \!\cdot\! \ln 10}}{{20{\tilde f_k}}}\! \cdot\! {{10}^{\frac{{\chi ({R_k})}}{{20}}}}} \right\},
\label{equ:fenxi1}
\end{align}
in which $\chi ({R_k}){\rm{ = 10lo}}{{\rm{g}}_{10}}\tfrac{{{p_k}{G_t}{G_r}}}{{({2^{\frac{{{R_k}}}{{{B_k}}}}} \!- \!1){\sigma ^2}}} -\! {\rm{20lo}}{{\rm{g}}_{10}}(\frac{{4\pi }}{c}){\rm{    }}$.
Therefore, the optimization problem can be reformulated as:
\begin{subequations}
\begin{align}
\mbox{(P2)}\,&\max\limits_{{\boldsymbol{\beta} },{\bf{\tilde f}},{\bf{R}}} {\rm{ }}\sum\nolimits_{k = 1}^K {\frac{{20}}{{\gamma'({\tilde f_k}) \!\cdot\! \ln 10}}W\left\{ {\frac{{\gamma'({\tilde f_k})\! \cdot \! \ln 10}}{{20{\tilde f_k}}} \!\cdot \!{{10}^{\frac{{\chi ({R_k})}}{{20}}}}} \right\}}   \nonumber \\
&\ {{\rm{s.t.}}}   \, \    {R_k} \ge {R_{k,th}},\forall k  ,\label{equ:3a}\\
&\qquad\  \rm(\ref{equ:1b}), \rm(\ref{equ:1d}), \rm(\ref{equ:1e}) \ \rm{and} \ \rm(\ref{equ:1f}),\nonumber
\end{align}
\end{subequations}
where ${\bf{R}} = ({R_1},...{R_k},...,{R_K})$ denotes the set of data rates.

\subsection{Problem Solution}

\begin{figure*}[bp]
\setlength{\abovecaptionskip}{-0.5pt}
\centering
\begin{tabular}{cc}
\begin{minipage}[t]{0.334\linewidth}
    \includegraphics[width = 1\linewidth]{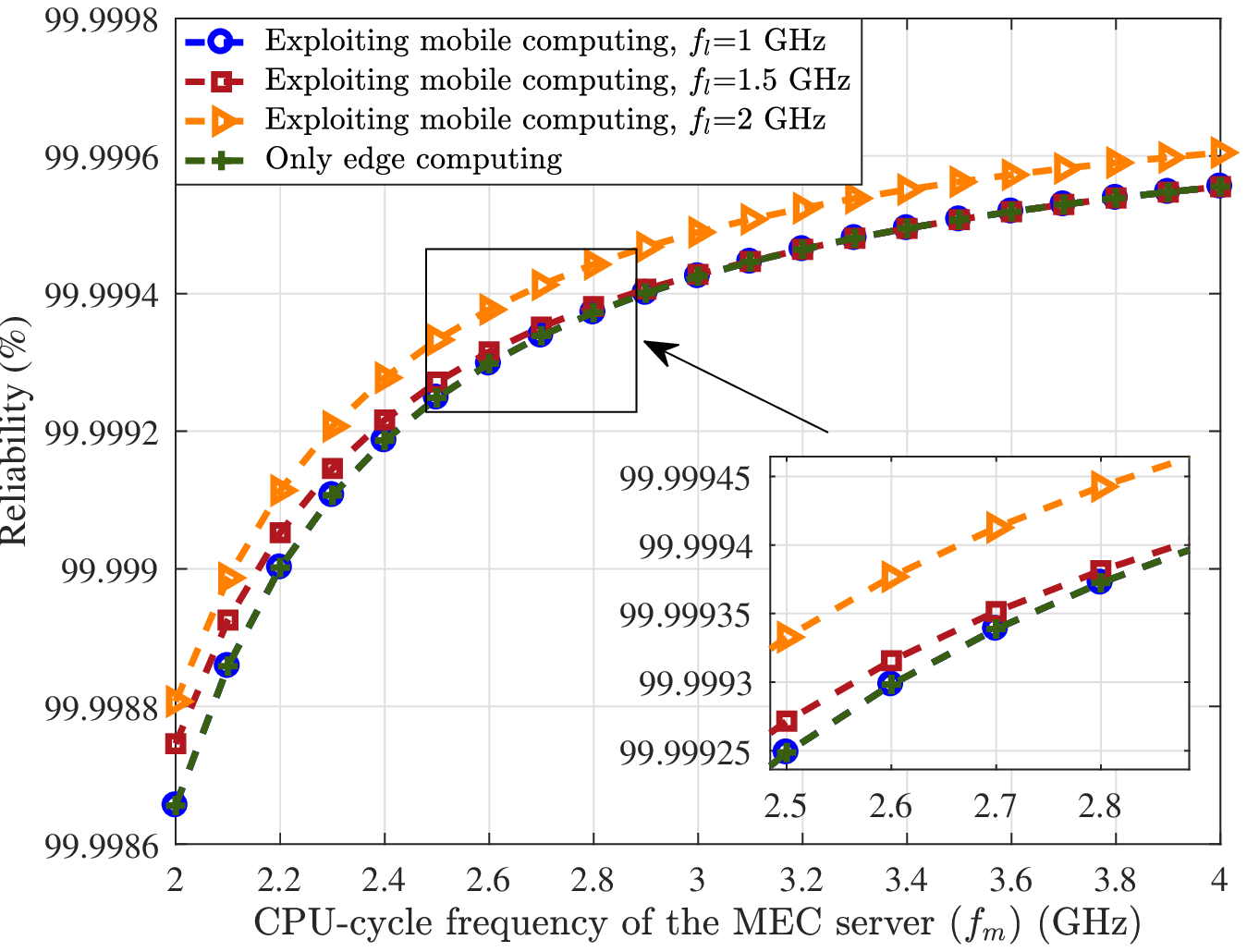}
           \caption{\scriptsize{Reliability in high-frequency communication.}}
            \label{reliability_THz}
\end{minipage}
\begin{minipage}[t]{0.334\linewidth}
    \includegraphics[width = 1\linewidth]{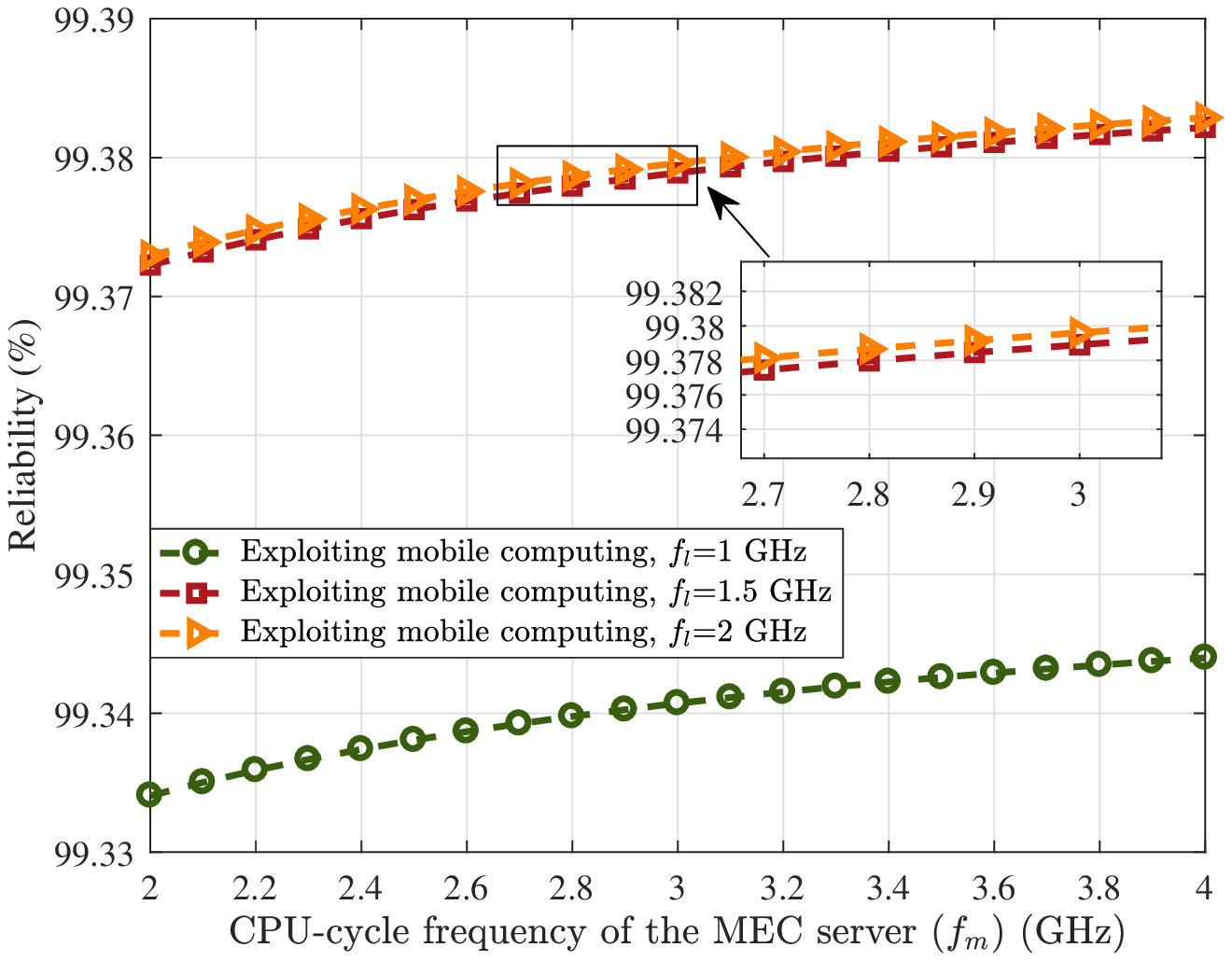}
           \caption{\scriptsize{Reliability in low-frequency communication.}}
            \label{reliability_4G_1}
\end{minipage}

\begin{minipage}[t]{0.333\linewidth}
    \includegraphics[width = 1\linewidth]{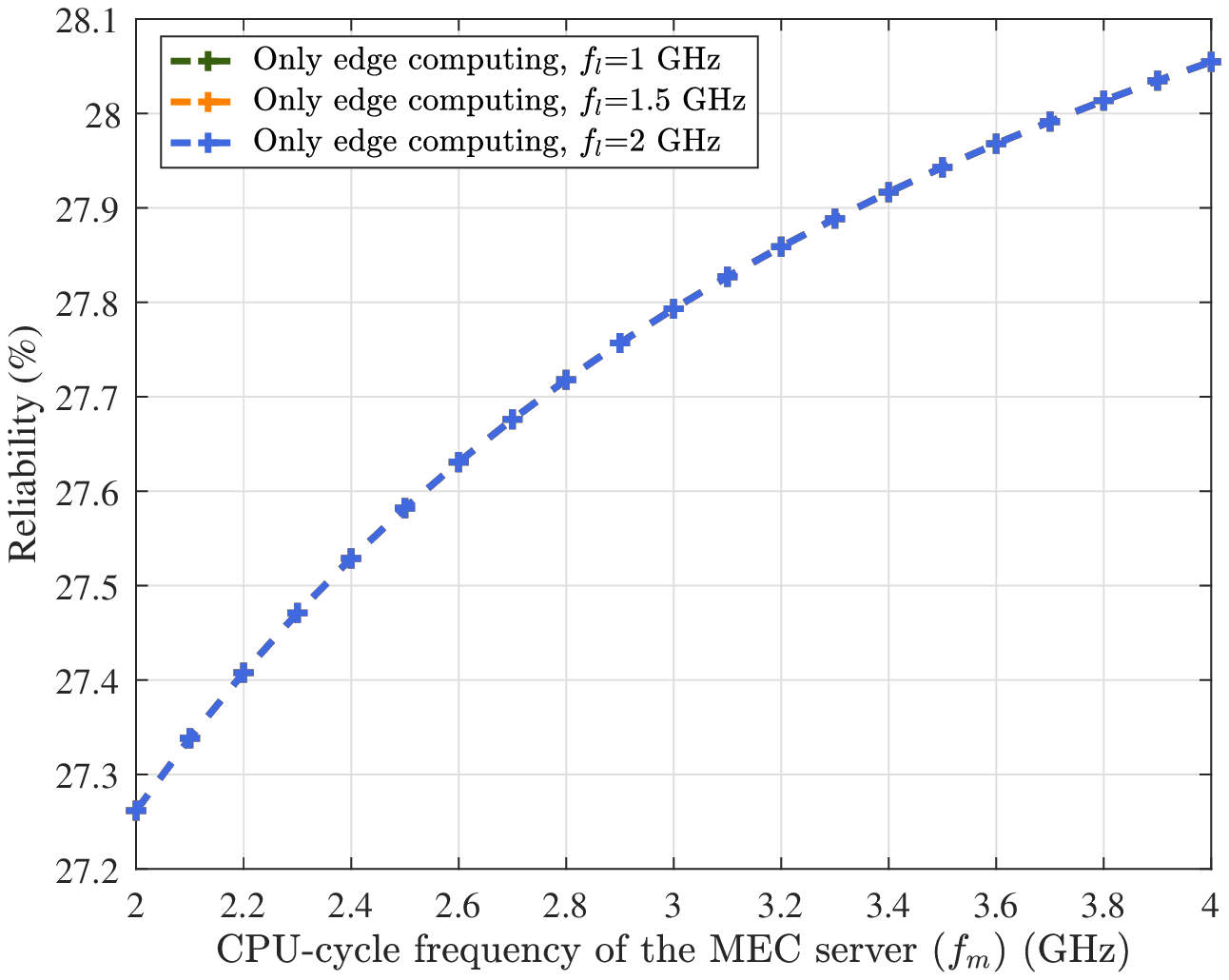}
           \caption{\scriptsize{Reliability in low-frequency communication.}}
            \label{reliability_4G_2}
\end{minipage}
\end{tabular}
\centering
\end{figure*}

\begin{figure*}[bp]
\setlength{\abovecaptionskip}{-0.5pt}
\begin{tabular}{cc}
\begin{minipage}[t]{0.303\linewidth}
    \includegraphics[width = 1\linewidth]{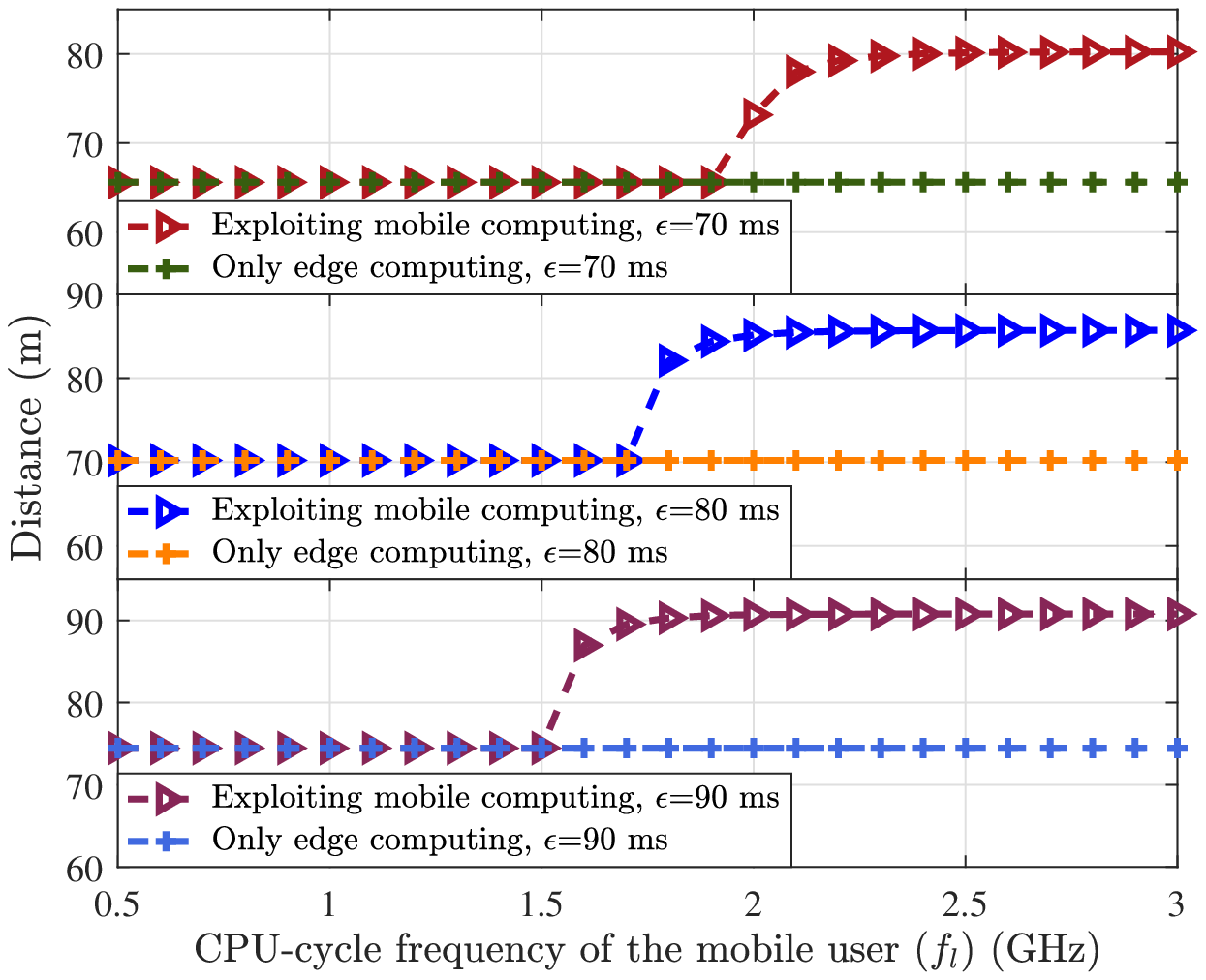}
\caption{\scriptsize{Communication coverage comparison in two different strategies.}}
            \label{dis_fl_epsilon}
\end{minipage}
\begin{minipage}[t]{0.205\linewidth}
    \includegraphics[width = 1\linewidth]{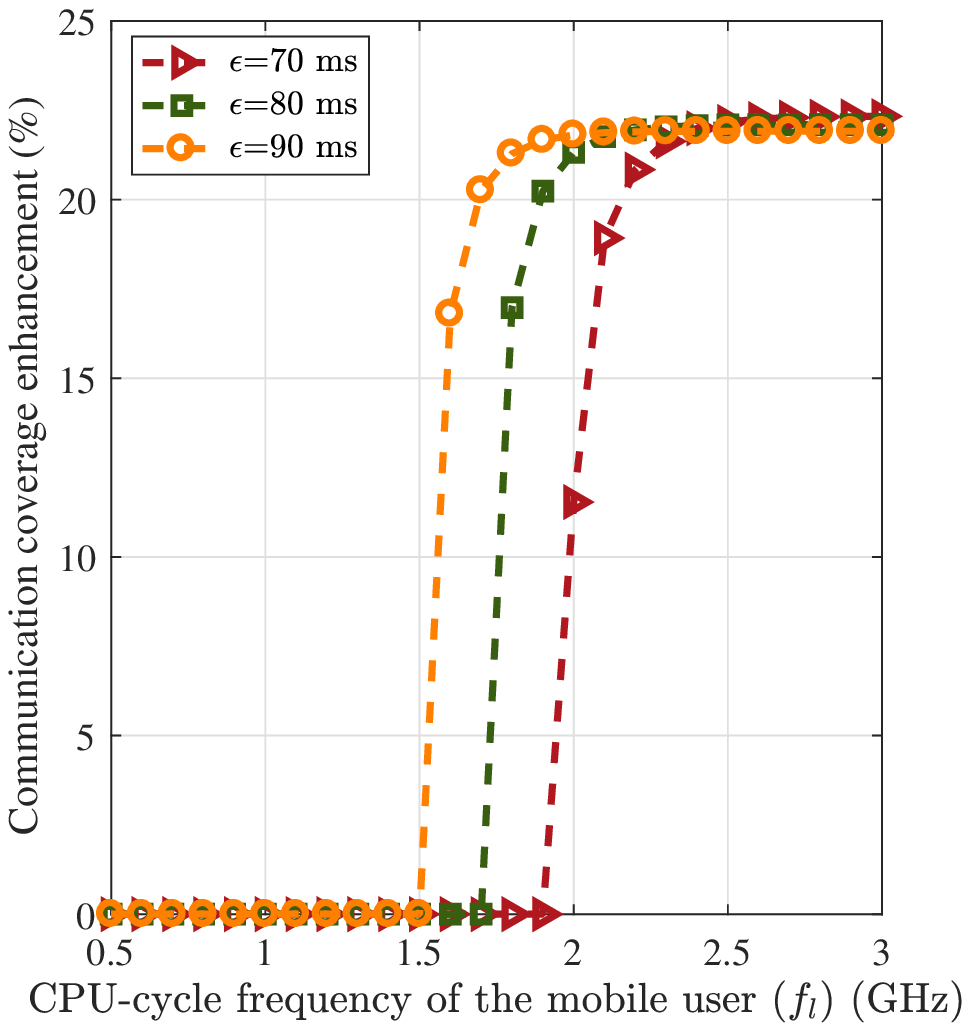}
\caption{\scriptsize{Coverage enhancement ratio.}}
            \label{percent_epsilon}
\end{minipage}

\begin{minipage}[t]{0.213\linewidth}
    \includegraphics[width = 1\linewidth]{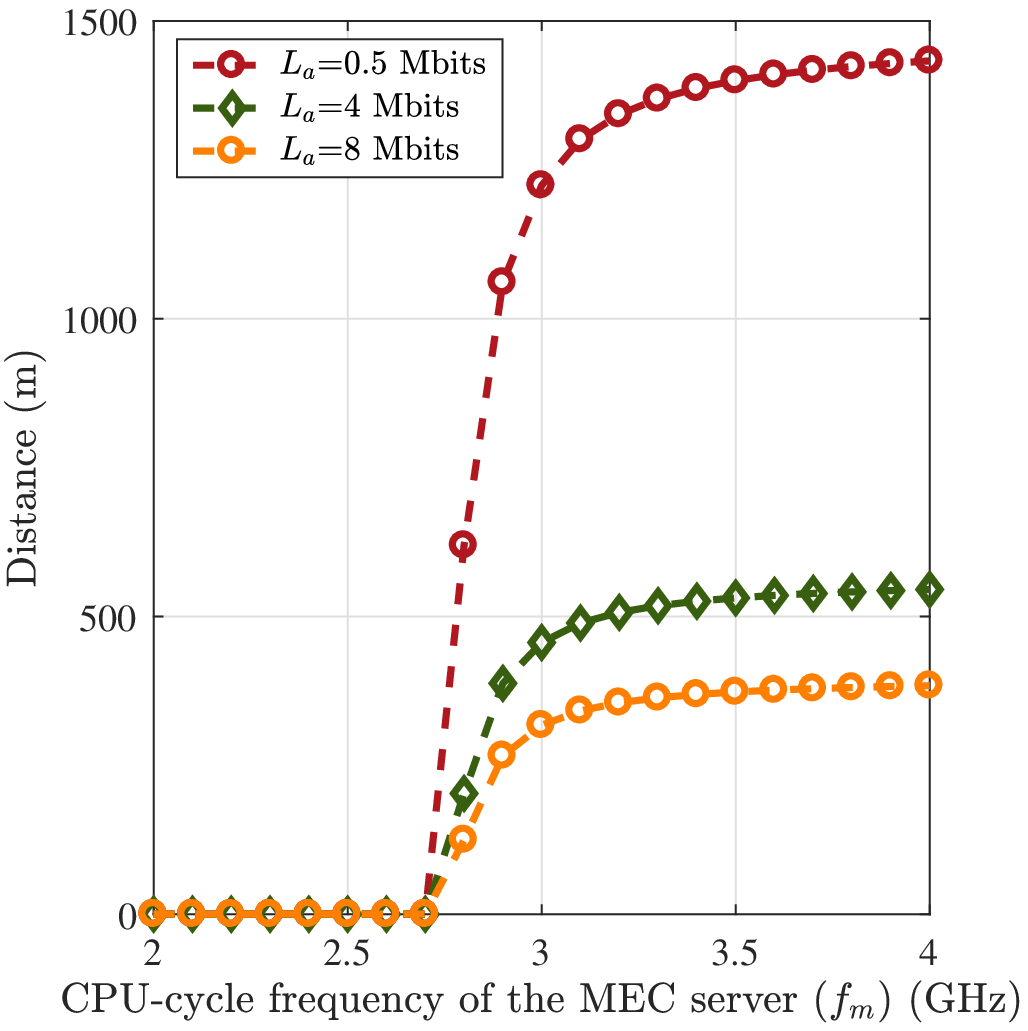}
\caption{\scriptsize{Distance versus the edge server capacity.}}
            \label{sum_fm}
\end{minipage}
\begin{minipage}[t]{0.263\linewidth}
    \includegraphics[width = 1\linewidth]{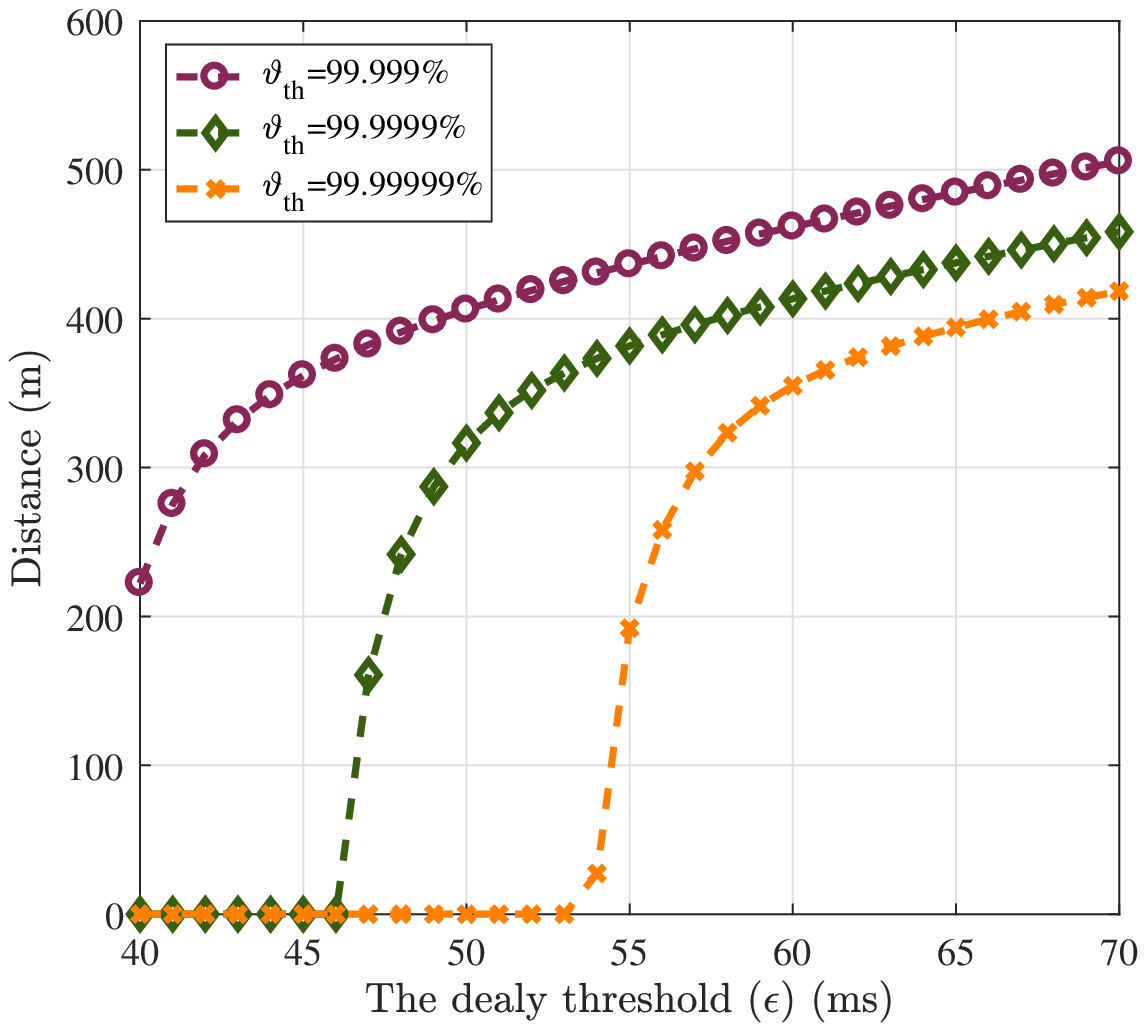}
\caption{\scriptsize{Distance versus the delay threshold.}}
            \label{sum_epsilon}
\end{minipage}
\end{tabular}
\end{figure*}

In general, we can solve (P2) by some classic iterative algorithms, where the disadvantage is that the computational complexity is quite high. As a result, based on the special
structure of the problem, we propose an effective and low-complexity method to solve (P2) by decomposing it into two sub-problems: the data rate threshold minimization problem and the sum communication distance maximization problem.

In this paper, we consider the strategy for all tasks offloading as a benchmark for comparison, where (\ref{equ:1d}) is the prerequisite. Therefore, in our strategy via exploiting mobile computing, (\ref{equ:1d}) is also satisfied, which means all the offloading probabilities are independent of each other. Then, (P2) can thus be described as follows:
\vspace{-0.06cm}
\begin{subequations}
\begin{align}
\mbox{(P3)}\,&\max\limits_{{\boldsymbol{\beta} },{\bf{\tilde f}},{\bf{R}}} {\rm{ }}\sum\nolimits_{k = 1}^K {\frac{{20}}{{\gamma'({\tilde f_k}) \!\cdot\! \ln 10}}W\left\{ {\frac{{\gamma'({\tilde f_k}) \!\cdot\! \ln 10}}{{20{\tilde f_k}}} \!\cdot\! {{10}^{\frac{{\chi ({R_k})}}{{20}}}}} \right\}}   \nonumber \\
&\ {{\rm{s.t.}}}   \, \    \rm(\ref{equ:1b}), \rm(\ref{equ:1e}), \rm(\ref{equ:1f})  \ \rm{and} \  \rm(\ref{equ:3a}),\nonumber
\end{align}
\end{subequations}

It is easy to observe that all transmission rates are also independent of each other due to the aforementioned independence of offloading probabilities. With this, (P3) can be decomposed into the following two optimization sub-problems. First, for each user we optimize its offloading probability to minimize the data rate threshold. Second, based on the obtained minimum data rate threshold of each user, we maximize the sum distance by optimizing frequency allocation.

Specifically, the objective function decreases monotonically as data rate increases according to (\ref{equ:fenxi1}). Therefore, the maximum sum distance can be achieved when the $k$-th user's data rate $R_k$ is equal to the data rate threshold ${R_{k,th}}$ in (\ref{equ:3a}). Moreover, as the data rate threshold is related to offloading probability, we need minimize the data rate threshold to improve the sum distance by optimizing the offloading probability. In what follows, we introduce how to solve the above two sub-problems in detail, respectively.

\subsubsection{Data Rate Threshold Minimization}

\

\noindent  \ \
Based on the above analyses, the first sub-problem that minimizes the data rate threshold ${R_{k,th}}$ by optimizing the offloading probability $\beta_k$ for the $k$-user is given by
\begin{subequations}
\begin{align}
\mbox{(P3-1)}\,&\min\limits_{{\beta_k}} {R_{k,th}},\forall k  \nonumber \\
&\ {{\rm{s.t.}}}  \ {\rm{        (1}} - {\beta _k}{\rm{)}}{\lambda _k} < {\mu _{l,k}}{\rm{, }}\forall k  .\nonumber\\
&\qquad\ {\rm{        0}} \le {\beta _k} \le {\rm{1, }}\forall k .\nonumber
\end{align}
\end{subequations}

Note that ${R_{k,th}}$ is a function only related to $\beta_k$. Thus, we can solve (P3-1) by taking the derivative of ${R_{k,th}}$ with respect to $\beta_k$, thereby obtaining the minimum data rate threshold set ${\bf{R}}_{th}' = \{R_{1,th}',...,R_{k,th}',...,R_{K,th}'\}$ corresponding to the optimal solution set ${\boldsymbol{\beta}}^* = \{\beta_{1}^*,...,\beta_{k}^*,...,\beta_{K}^*\}$.

\subsubsection{Sum Distance Maximization}

\

\noindent  \ \
Based on the results of (P3-1), the second sub-problem for sum distance maximization can be described as:
\begin{subequations}
\begin{align}
\mbox{(P3-2)}\,&\max\limits_{{\bf{\tilde f}},{\bf{R}}} {\rm{ }}\sum\nolimits_{k = 1}^K {\frac{{20}}{{\gamma'({\tilde f_k})\! \cdot\! \ln 10}}W \! \! \left\{\! {\frac{{\gamma'({\tilde f_k}) \! \cdot \! \ln 10}}{{20{\tilde f_k}}} \!\cdot\! {{10}^{\frac{{\chi ({R_k})}}{{20}}}}} \! \right\}}    \nonumber \\
&\ {{\rm{s.t.}}} \quad R_{k} = {{R}}_{k,th}',\forall k  ,\nonumber\\
& \quad\ \ \quad {\tilde f_k} \in {{\bf{\tilde f}}_0},{\tilde f_k} \ne {\tilde f_j},k \ne j,\forall k,j \nonumber.
\end{align}
\end{subequations}

\begin{theorem}
The achievable communication distance (\ref{equ:fenxi1}) of the $k$-th user, $\forall k$,
is a binary strictly increasing function in the frequency range $0.1-0.215$ THz.
Based on this, sort the elements in ${{\bf{\tilde f}}_0}$ and ${\bf{R}}_{th}'$, respectively, as ${\tilde f_0^{1*}}\!\! <\!... \!<\! {\tilde f_0^{k*}} \!\!<\! ...\! <\! \!{\tilde f_0^{K*}}$, $R_{1,{{th}}}^{*} \!\le\! ... \!\le\! R_{k,{{th}}}^{*}\! \le \!... \!\le\! R_{K,{{th}}}^{*}$, then the maximum sum distance of the $K$ users is
\begin{align}
d^*\!\!=\!\!\sum\nolimits_{k = 1}^K \!{\!\frac{{20}}{{\gamma'({{\tilde f_0^{k*}}}) \!\cdot \!\ln \!10}}W\!\!\left\{\! {\frac{{\gamma'({\tilde f_0^{k*}}) \!\cdot \!\ln \!10}}{{20{{\tilde f_0^{k*}}}}} \!\cdot \! {{10}^{\frac{{\chi (R_{k,{{th}}}^{*})}}{{20}}}}} \!\right\}} .\!
\end{align}
\end{theorem}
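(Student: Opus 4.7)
The plan is to prove the theorem in two stages. Stage A establishes the claimed monotonicity of the distance function (\ref{equ:fenxi1}) viewed as a bivariate (``binary'') function of the pair $(\tilde f,R)$ on the stated frequency range; Stage B then invokes a pairwise-exchange argument to pin down the optimal assignment of frequencies to rate thresholds and yield the stated closed form for $d^*$.

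For Stage A, rather than differentiate the Lambert-$W$ expression directly, I would work with the equivalent implicit path-loss balance
\begin{equation*}
\gamma'(\tilde f)\,d \;+\; 20\log_{10}\!\bigl(\tfrac{4\pi\tilde f\,d}{c}\bigr)\;=\;\chi(R),
\end{equation*}
and apply implicit differentiation to extract the signs of $\partial d/\partial\tilde f$ and $\partial d/\partial R$. The key quantitative input is that on $\tilde f\in[0.1,\,0.215]$ THz the fitted absorption $\gamma'(\tilde f)$ is small and nondecreasing (visible directly from Table I, since every Gaussian center $b_i$ lies above $0.215$ THz so each summand is on its rising flank in this sub-band). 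Combined with the fact that $\chi(R)$ is strictly decreasing in $R$ (because $2^{R/B_k}-1$ is strictly increasing), this gives $\partial d/\partial\tilde f<0$ and $\partial d/\partial R<0$, so $d$ is strictly monotone in each of its two arguments on the region of interest.

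For Stage B, I would prove the supermodular (Monge) inequality: for $\tilde f_1<\tilde f_2$ and $R_1\le R_2$,
\begin{equation*}
d(\tilde f_1,R_1)+d(\tilde f_2,R_2)\;\ge\;d(\tilde f_2,R_1)+d(\tilde f_1,R_2).
\end{equation*}
The cleanest route is to subtract the implicit path-loss balances for the two candidate pairings, which eliminates $W$ entirely and reduces the inequality to a sign comparison between $\gamma'(\tilde f_1)-\gamma'(\tilde f_2)$ and a logarithmic term; the smallness and monotonicity of $\gamma'$ on $[0.1,0.215]$ THz then seal the sign. A standard pairwise-swap argument then shows that any non-assortative assignment of the frequencies $\{\tilde f_0^k\}$ to the rate thresholds $\{R_{k,th}'\}$ can be strictly improved by one transposition, so that the assortative pairing---smallest frequency with smallest rate threshold, and so on through $\tilde f_0^{k*}\leftrightarrow R_{k,th}^*$---is globally optimal. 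Substituting this pairing back into (\ref{equ:fenxi1}) gives the claimed expression for $d^*$.

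The main obstacle is the supermodularity step: the distance has no clean closed form in $\tilde f$, and both the path-loss balance and the Lambert $W$ couple the two variables in a nontrivial way, so a direct computation of $\partial^{2} d/\partial\tilde f\,\partial R$ from (\ref{equ:fenxi1}) is feasible but unpleasant. This is why I would prefer the subtraction-of-implicit-equations route, which bypasses $W$ and reduces the question to the already-controlled behaviour of $\gamma'$ on the stated interval; as a fallback, one can verify the mixed-partial sign numerically across $[0.1,0.215]$ THz and invoke continuity to close any symbolic gap.
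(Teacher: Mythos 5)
Your overall architecture coincides with the paper's: the proof really rests on (i) an increasing-differences (supermodularity/Monge) property of $d_k(R_k,\tilde f_k)$ on $0.1$--$0.215$ THz, and (ii) a rearrangement/exchange argument showing that the assortative pairing (smallest frequency with smallest rate threshold) maximizes the sum. One terminological point: what the paper calls a ``binary strictly increasing function'' is \emph{not} monotonicity in each argument (your Stage A, which correctly gives $\partial d/\partial\tilde f<0$ and $\partial d/\partial R<0$), but exactly your Stage B Monge inequality $d_k(R_k+\Delta R_k,\tilde f_k+\Delta\tilde f_k)+d_k(R_k,\tilde f_k)>d_k(R_k+\Delta R_k,\tilde f_k)+d_k(R_k,\tilde f_k+\Delta\tilde f_k)$; your plan does cover it, just under a different name. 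Where you genuinely diverge is in \emph{how} this property is established. The paper differentiates the explicit Lambert-$W$ formula (\ref{equ:fenxi1}): it computes $\xi=\partial d_k/\partial R_k$, then the mixed partial $\partial^2 d_k/\partial R_k\partial\tilde f_k$, and shows positivity by proving $\partial W(y)/\partial\tilde f_k\le 0$ precisely when $\tilde f_k\,(\partial\gamma'/\partial\tilde f_k)\le\gamma'$, which holds for $\tilde f_k\le 216.5692$ GHz --- this is where the $0.215$ THz cutoff originates. Your implicit-equation route bypasses $W$ entirely and needs only that $\gamma'$ is nondecreasing on the band (true, since every center $b_i$ in Table I exceeds $447$ GHz); this is cleaner and in fact yields supermodularity on a wider frequency interval than the paper's case analysis certifies. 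For the assignment step, your iterated pairwise transpositions and the paper's induction on $K$ with ``same-order/reverse-order pairs'' are the same rearrangement argument in different clothing.

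One caution: your claim that subtracting the path-loss balances of the two pairings ``reduces the inequality to a sign comparison'' is optimistic as stated. Writing $d_{ij}=d(\tilde f_i,R_j)$, the subtraction yields the single identity $\gamma'(\tilde f_1)(d_{11}-d_{12})-\gamma'(\tilde f_2)(d_{21}-d_{22})=-\tfrac{20}{\ln 10}\ln\tfrac{d_{11}d_{22}}{d_{12}d_{21}}$, from which the Monge inequality does not follow by inspection: assuming it fails, the left side is negative (using $\gamma'(\tilde f_1)\le\gamma'(\tilde f_2)$), hence $d_{11}d_{22}>d_{12}d_{21}$, and you must still combine this with the interlacing $d_{11}>d_{12},d_{21}>d_{22}$ and a concavity step (the map $x\mapsto x(p-x)$ with $p=d_{12}+d_{21}$ is decreasing for $x\ge p/2$, so $d_{11}d_{22}<d_{11}(p-d_{11})\le d_{12}d_{21}$) to reach a contradiction. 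Alternatively, and more simply within your own framework, note from the implicit balance that $\partial d/\partial\chi=1/\bigl(\gamma'(\tilde f)+\tfrac{20}{d\ln 10}\bigr)$ at fixed $\tilde f$, which is pointwise smaller at the larger frequency (larger $\gamma'$, smaller $d$); integrating over $\chi\in[\chi(R_2),\chi(R_1)]$ gives $d_{11}-d_{12}\ge d_{21}-d_{22}$ directly. Either patch closes the gap rigorously, so your approach is sound, but as written that step is incomplete.
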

\begin{proof}
See Appendix A.
\end{proof}
Theorem 1 presents an effective and low-complexity method to maximize sum distance, that is, the smaller the user's minimum rate threshold, the lower the assigned frequency.

\vspace{-4pt}
\section{Numerical Results}
\vspace{-4pt}
In this section, we numerically evaluate the performance of the proposed method for improving THz coverage. In Fig.\ref{reliability_THz}, Fig.\ref{reliability_4G_1} and Fig.\ref{reliability_4G_2}, we first verify the effectiveness of our framework for providing 6G URLLC services when E2E delay threshold is $80$ ms. Subsequently, we validate the distance improvement for the multi-user case in Fig.\ref{dis_fl_epsilon}, Fig.\ref{percent_epsilon}, Fig.\ref{sum_fm} and Fig.\ref{sum_epsilon}. Table II gives a summary of the simulation parameters.
\vspace{-12pt}
\begin{table}[htbp]
\setlength{\abovecaptionskip}{-0.05cm}
\caption{Simulation Parameters}
\small
\centering
\begin{tabular}{m{1cm}m{1.4cm}|m{0.9cm}m{2cm}}
\toprule
Parameter    & Value                       & Parameter    & Value                          \\[-2.5pt]
\midrule
$B$        & $10$ GHz                       &  $p$        & $100$ mW \\
$N_0$        & $-40$ dBm                     & $\tilde f$        & $[0.1,0.2]$ THz                          \\
$G_t$        & $20$ dBi                     & $L_a$        & $8$ Mbits \cite{La}                 \\
$G_r$        & $20$ dBi                    & $\mu_a$        & $10^7$ cycles\cite{mu}                         \\
\bottomrule
\end{tabular}
\end{table}
\vspace{-5pt}

Fig.\ref{reliability_THz}, Fig.\ref{reliability_4G_1} and Fig.\ref{reliability_4G_2} show the reliability in high- and low-band communication, respectively. We can see that the reliability of all jobs only executing in edge server is lower than that via exploiting mobile computing. The reason is that the transmission delay will be reduced when mobile user executes some tasks, which reduces total delay and further improve reliability. In addition, Fig.\ref{reliability_4G_1} and Fig.\ref{reliability_4G_2} also indicate that the reliability in low-frequency communication is too low to satisfy the reliability threshold (${\vartheta _{th}}{\rm{ = }}99.999{\rm{\% }}$). The reason is that the data rate in low-frequency communication is relatively low, which leads to a large transmission delay. Accordingly, the reliability of the edge computing is low, further the whole reliability is also low.

In Fig.\ref{dis_fl_epsilon} and Fig.\ref{percent_epsilon}, we compare the coverage performance in the single user case for the two different schemes, i.e., only edge computing, and exploiting mobile computing, where the reliability threshold is ${\vartheta _{th}}{\rm{ = }}99.999{\rm{\% }}$. It can be seen that the communication distances are greatly improved when we exploit mobile computing. Specifically, when the user's computation capacity is very limited, it cannot execute tasks locally, and all tasks are processed in edge server. At this time, the communication distances for these two schemes are the same. However, when user's capacity become larger, it can process partial tasks even under a high reliability requirement, thus the burden of wireless communication link is reduced, thereby increasing the communication coverage.

Fig.\ref{sum_fm} illustrates the relationships between distance and edge server capacity with different $L_a$, where the number of mobile users is $10$. It can be observed that the distance increases with respect to the edge server capacity while decreases with respect to the data size. This can be explained as, under the specific reliability threshold and delay threshold, when the edge server capacity increases, the data rate can be lower, which results in the improvement of the distance. On the contrary, when the data size increases, the transmission rate should be increased to satisfy the same reliability requirement. Therefore, the maximum sum distance will decrease. Besides, we can see that the distance could stay stationary when the edge server capacity is sufficiently large. This is expected since the fact that the edge server with large capacity offers a very low computation execution time, and there is no need to increase data rate to meet reliability requirements.

In Fig.\ref{sum_epsilon}, the distance versus the delay threshold is plotted with different reliability thresholds. We also assume the number of users is $10$. It can be seen that the distance increases as the delay threshold increases, and decreases with respect to the reliability threshold, respectively. This is expected since the fact that with the increase of delay threshold and the decrease of reliability threshold, respectively, all tasks can be executed within a larger delay, which leads to the transmission rate requirement decreasing. Further, according to (\ref{equ:rate}), the achievable distance becomes larger. Additionally, it is worth noting that the distance is $0$ when the reliability threshold is largely high (e.g., $99.99999{\rm{\% }}$), and the delay threshold is sufficiently low (e.g., $50$ ms). The reason is that the above reliability and delay requirements are sufficiently strict, all tasks cannot be computed in time under the specific local and edge computation capacities.

\section{Conclusions}
In this paper, we aimed to improve THz coverage aided by mobile computing based on the joint design of communication and computing. We established an optimization problem which maximized the communication distance by optimizing offloading probability and frequency allocation. This problem was non-convex, followed by we proposed an effective and low-complexity method to solve it. Finally, numerical results demonstrated the effectiveness of our schemes.
    \begin{appendices}
\section{  }
\textbf{\emph{On the one hand}}, we can prove ${d_k}$ in (\ref{equ:fenxi1}) is a binary strictly increasing function in the frequency range $0.1-0.215$ THz by the following two steps:

1) By calculating the first- and second-order partial derivatives for  (\ref{equ:fenxi1}), respectively, we obtain $\frac{{\partial \xi ({R_k},{\tilde f_k})}}{{\partial {\tilde f_k}}}{\rm{ = }}\frac{{{\partial ^2}{d_k}}}{{\partial {R_k}\partial {\tilde f_k}}} > 0$ in $0.1-0.215$ THz .

2) Based on 1), we further obtain ${d_k}({R_k} + \Delta {R_k},{\tilde f_k} + \Delta {\tilde f_k}) + {d_k}({R_k},{\tilde f_k}) - {d_k}({R_k} + \Delta {R_k},{\tilde f_k}) - {d_k}({R_k},{\tilde f_k} + \Delta {\tilde f_k}) > 0$.

Specifically, letting ${y = }\frac{{\gamma'\ln 10}}{{20}} \cdot {\rm{1}}{{\rm{0}}^{\frac{\chi }{{20}}}} \cdot \tilde f_k^{ - 1}$, we have
\begin{align}
\xi (\!{R_k},{\tilde f_k}\!)\!=\! \frac{{\partial {d_k}}}{{\partial {R_k}}}\! =\!  - \frac{{{2^{\frac{{{R_k}}}{B_k}}} \!\cdot\! 10\ln 2}}{{{B_k}({2^{\frac{{{R_k}}}{B_k}}} \!-\! 1) \!\cdot\! \ln 10}} \!\cdot\! \frac{{{\rm{W(}}y)}}{{\gamma'{\rm{(W(}}y) \!+\! 1)}},
\end{align}

\begin{align}
\frac{{\partial \xi ({R_k},{\tilde f_k})}}{{\partial {\tilde f_k}}}= &\frac{{{\partial ^2}{d_k}}}{{\partial {R_k}\partial {\tilde f_k}}}\notag\\
=&  - \frac{{{2^{\frac{{{R_k}}}{B_k}}} \cdot 10\ln 2}}{{B_k({2^{\frac{{{R_k}}}{B_k}}} - 1) \cdot \ln 10}}\notag\\
& \!\cdot\! \frac{{\frac{{\partial {\rm{W(}}y)}}{{\partial {\tilde f_k}}} \cdot \gamma' \!-\! {\rm{W(}}y) \!\cdot\! \frac{{\partial \gamma'}}{{\partial {\tilde f_k}}}\! \cdot\! {\rm{(W(}}y) \!+\! 1)}}{{{{[\gamma'{\rm{(W(}}y) \!+\! 1)]}^2}}}.
\end{align}
Here, $\frac{{\partial \gamma'}}{{\partial {\tilde f_k}}}$ and $\frac{{\partial {\rm{W(}}y)}}{{\partial {\tilde f_k}}}$ are the derivations of $\gamma'$ and ${\rm{W(}}y)$ with respect to ${\tilde f_k}$, respectively, which can be given by
\begin{align}
\frac{{\partial \gamma'}}{{\partial {\tilde f_k}}}{\rm{ = }}\sum\limits_{i = 1}^7 {2{a_i}\frac{{{b_i} - {\tilde f_k}}}{{c_i^2}}{e^{ - {{(\frac{{{\tilde f_k} - {b_i}}}{{{c_i}}})}^2}}}}  > 0,
\end{align}
\begin{align}
\frac{{\partial {\rm{W(}}y)}}{{\partial {\tilde f_k}}}{\rm{ = }}\frac{{\frac{{\ln 10}}{{20}} \cdot {\rm{1}}{{\rm{0}}^{\frac{\chi }{{20}}}} \cdot {\rm{W(}}y)}}{{y{\rm{(W(}}y) + 1)}} \cdot \frac{{{\tilde f_k} \cdot \frac{{\partial \gamma'}}{{\partial {\tilde f_k}}} - \gamma'}}{{\tilde f_k^2}}.
\end{align}
Furthermore, we can obtain
\begin{align}
\left\{ {\begin{array}{*{20}{c}}
{\frac{{{\tilde f_k} \cdot \frac{{\partial \gamma'}}{{\partial {\tilde f_k}}} - \gamma'}}{{\tilde f_k^2}} > 0,\tilde f_k > 216.5692 \ {\rm{ GHz }}}\\
{{\rm{ }}\frac{{{\tilde f_k} \cdot \frac{{\partial \gamma'}}{{\partial {\tilde f_k}}} - \gamma'}}{{\tilde f_k^2}} \le 0,\tilde f_k \le 216.5692 \ {\rm{ GHz}}}
\end{array}} \right..
\end{align}
Therefore, when $\tilde f_k \le 216.5692$ GHz, we have $\frac{{\partial {\rm{W(}}y)}}{{\partial {\tilde f_k}}} \le 0$, and further obtain $\frac{{\partial \xi ({R_k},{\tilde f_k})}}{{\partial {\tilde f_k}}}{\rm{ = }}\frac{{{\partial ^2}{d_k}}}{{\partial {R_k}\partial {\tilde f_k}}} > 0$. This completes the proof of 1).

Next, we will prove 2). Based on the results of 1), we can arrive at $\xi ({R_k},{\tilde f_k} + \Delta {\tilde f_k}) - \xi ({R_k}, {\tilde f_k}) > 0$, where $\Delta {\tilde f_k} > 0$.
In addition, we assume $\rho ({R_k},{\tilde f_k}){\rm{ = }}{d_k}({R_k},{\tilde f_k} + \Delta {\tilde f_k}) - {d_k}({R_k},{\tilde f_k})$. By taking the derivative of ${\rho ({R_k},{\tilde f_k})}$ with respect to $R_k$, we can arrive at
\begin{align}
\frac{{\partial \rho ({R_k},{\tilde f_k})}}{{\partial {R_k}}} =&\frac{{\partial {d_k}({R_k} ,{\tilde f_k}+ \Delta {\tilde f_k})}}{{\partial {R_k}}} - \frac{{\partial {d_k}({R_k},{\tilde f_k})}}{{\partial {R_k}}}\notag\\
 = &\xi  ({R_k},{\tilde f_k} + \Delta {\tilde f_k}) - \xi  ({R_k},{\tilde f_k}) > 0,
\end{align}
which leads to $\rho ({R_k} + \Delta {R_k},{\tilde f_k}) - \rho ({R_k},{\tilde f_k}) > 0$. Then we further obtain ${d_k}({R_k} + \Delta {R_k},{\tilde f_k} + \Delta {\tilde f_k}) + {d_k}({R_k},{\tilde f_k}) - {d_k}({R_k} + \Delta {R_k},{\tilde f_k}) - {d_k}({R_k},{\tilde f_k} + \Delta {\tilde f_k}) > 0$. According to the definition of the binary strictly increasing function, we can infer that the distance of the $k$-th user, i.e., ${d_k}({R_k},{\tilde f_k}) = \frac{{20}}{{\gamma({\tilde f_k}) \cdot \ln 10}}W\left\{ {\frac{{\gamma({\tilde f_k}) \cdot \ln 10}}{{20{\tilde f_k}}} \cdot {{10}^{\frac{{\chi ({R_k})}}{{20}}}}} \right\}$ is a binary strictly increasing function when the frequency $\tilde f_k$ belongs to $0.1-0.215$ THz.

\textbf{\emph{On the other hand}}, sort the elements in ${{\bf{\tilde f}}_0}$ and ${\bf{R}}_{th}'$, respectively, as ${\tilde f_0^{1*}}\! <\!... \!<\! {\tilde f_0^{k*}} \!<\! ...\! <\! {\tilde f_0^{K*}}$, $R_{1,{{th}}}^{*} \!\le\! ... \!\le\! R_{k,{{th}}}^{*}\! \le \!... \!\le\! R_{K,{{th}}}^{*}$, and define ${\bf{\tilde f}}_0'$ is a permutation of ${\bf{\tilde f}_0}$.  The rest of Theorem 1 can be proved when the following inequality is always satisfied, i.e.,
\begin{align}
\sum\limits_{i = 1}^K\! {d(R_{i,th}^{*},{\tilde f_0^{i*}})\! \ge \!} \sum\limits_{i = 1}^K \!{d(R_{i,th}^{*},{\tilde f_0^{i'}})\! \ge\! } \sum\limits_{i = 1}^K \! {d(R_{i,th}^{*},{\tilde f_0^{K - i + 1*}})}.
\label{proof}
\end{align}
Fortunately, we can use the inductive reasoning to complete the proofs of the left inequality, i.e., $\sum\nolimits_{i = 1}^K \! {d(\!R_{i,th}^{*},\!{\tilde f_0^{i*}}\!) \!\ge } \!\sum\nolimits_{i = 1}^K\! {d(\!R_{i,th}^{*},\!{\tilde f_0^{i'}}\!)}$, and the right inequality, i.e., $\sum\nolimits_{i = 1}^K\! {d(\!R_{i,th}^{*},\!{\tilde f_0^{i'}}\!)\! \ge \!} \sum\nolimits_{i = 1}^K \!{d(\!R_{i,th}^{*},\!{\tilde f_0^{K - i + 1*}}\!)}$, respectively.

Specifically, we use the inductive reasoning to complete the proof of the left inequality by the following three steps:

1) $K = 1$, the inequality transforms into an equation, which is obviously true.

2) Assuming that when $K = n$, the inequality holds true.

3) In what follows, we prove that the inequality also holds true when $K = n+1$.
%

First, the data from the above two permutations ( i.e., ${\bf{\tilde f_0}}'$ and $ {\bf{R}}_{th}^* $ ), respectively, are combined in pairs, and they form $n+1$ pairs, i.e., $(R_{{i_1},th}^{*},{\tilde f_0^{{i_1}'}})$, $(R_{{i_2},th}^{*},{\tilde f_0^{{i_2}'}})$ ,..., $(R_{{i_{n + 1}},th}^{*},{\tilde f_0^{{i_{n + 1}}'}})$. Then arrange $R_{{i_j},th}^{*} $ and ${\tilde f_0^{{i_j}'}}(\forall j = 1,2,...n + 1)$, respectively, in order. If these two data are in the same position in their respective arrangements, $(R_{{i_j},th}^{*},{\tilde f_0^{{i_j}'}})$ is called same-order pair.
On the contrary, if the sum of their position indexes in their respective arrangements is $n+1$, $(R_{{i_j},th}^{*},{\tilde f_0^{{i_j}'}})$ is called reverse-order pair.

\begin{itemize}
  \item If there is one same-order pair $(R_{{i_l},th}^{*},{\tilde f_0^{{i_l}'}})$ in $n+1$ pairs, according to the inductive reasoning, we can arrive at
  \begin{align}
\sum\limits_{j = 1,j \ne l}^{n + 1} {d(R_{{i_j},th}^{*},{\tilde f_0^{{i_j}*}}) \ge }& \sum\limits_{j = 1, j \ne l}^{n + 1} {d(R_{{i_j},th}^{*},{\tilde f_0^{{i_j}'}})}.
\end{align}
Adding $d(R_{{i_l},th}^{*},{\tilde f_0^{{i_l}*}})$ to both sides of the above inequality, we have
\begin{align}
\sum\limits_{j = 1}^{n + 1} {d(R_{{i_j},th}^{*},{\tilde f_0^{{i_j}*}}) \ge }& \sum\limits_{j = 1}^{n + 1} {d(R_{{i_j},th}^{*},{\tilde  f_0^{{i_j}'}})}\label{equ:before}.
\end{align}
\item If there is not even one same-order pair in $n+1$ pairs, we assume that $(R_{{i_{n + 1}},th}^{*},{\tilde f_0^{{i_r}'}})$ is a same-order pair, and $R_{{{i_1},{th}}}^{*} \le R_{{{i_2},{th}}}^{*} \le ... \le R_{{{i_{n + 1}},{th}}}^{*}$. We have ${\tilde f_0^{{i_r}'}} \ge {\tilde f_0^{{i_{n + 1}}'}}$ as well as $R_{{{i_r},{th}}}^{*} \le R_{{{i_{n + 1}},{th}}}^{*}$. After exchanging ${\tilde f_0^{{i_r}'}}$ and ${\tilde f_0^{{i_{n + 1}}'}}$, we compute the sum of distance, i.e., ${D_1}$, which can be given by
  \begin{align}
{D_1} =&\sum \limits_{j = 1, j \ne r,j \ne n+1}^{n + 1} {d(R_{{i_j},th}^{*},{\tilde f_0^{{i_j}'}})}  + d(R_{{i_r},th}^{*},{\tilde f_0^{{i_{n + 1}'}}})\notag\\
 &+ d(R_{{i_{n + 1}},th}^{*},{\tilde f_0^{{i_r}'}}).
\end{align}
Letting ${D_2} = \sum\limits_{j = 1}^{n + 1} {d(R_{{i_j},th}^{*},{\tilde f_0^{{i_j}'}})}$, we compute ${D_2} - {D_1}$ as follows,
  \begin{align}
\!\!{D_2} \!-\!\! {D_1}&\!=\!\! \sum\limits_{j = 1}^{n + 1}\! {d(R_{{i_j},th}^{*},{\tilde f_0^{{i_j}'}}\!)} \! -\!\! \sum \limits_{j = 1, j \ne r,j \ne n+1}^{n + 1} \!\!\!\!{d(R_{{i_j},th}^{*},{\tilde  f_0^{{i_j}'}}\!)}\notag\\
&\quad  - d(R_{{i_r},th}^{*},{\tilde f_0^{{i_{n + 1}'}}})
- d(R_{{i_{n + 1}},th}^{*},{\tilde f_0^{{i_r}'}})\notag\\
&=\!d(R_{{i_r},th}^{*},{\tilde f_0^{{i_r}'}}\!) + d(R_{{i_{n + 1}},th}^{*},{\tilde f_0^{{i_{n + 1}'}}}\!) \notag \\
&\quad- d(R_{{i_r},th}^{*},{\tilde f_0^{{i_{n + 1}'}}}\!) - d(R_{{i_{n + 1}},th}^{*},{\tilde f_0^{{i_r}'}}\!).
\end{align}
Furthermore, as ${d_k}({R_k},{\tilde f_k})$ is a binary strictly increasing function, we can obtain ${D_2} - {D_1} \le 0$.
Moreover, since there is one same-order pair $(R_{{i_{n + 1}},th}^{*},{\tilde f_0^{{i_r}'}})$ in ${D_1}$, according to (\ref{equ:before}) we have
  \begin{align}
\sum\limits_{j = 1}^{n + 1} {d(R_{{i_j},th}^{*},{\tilde f_0^{{i_j}*}}) \ge } {D_1} \ge {D_2},
\end{align}
thereby obtaining
  \begin{align}
\sum\limits_{j = 1}^{n + 1} {d(R_{{i_j},th}^{*},{\tilde f_0^{{i_j}*}}) \ge } \sum\limits_{j = 1}^{n + 1} {d(R_{{i_j},th}^{*},{\tilde f_0^{{i_j}'}})} .
\end{align}
\end{itemize}

Now, we complete the proof of the left inequality in (\ref{proof}). Analogously, the proof of the right inequality can be completed  in the same way from the perspective of the reverse-order pair.

After completing all the proofs of (\ref{proof}), it is easy to observe that the maximum sum distances is given by
\begin{align}
d^*\!\!=\!\!\sum\limits_{k = 1}^K \!{\!\frac{{20}}{{\gamma'({{\tilde f_0^{k*}}}) \!\cdot \!\ln \!10}}W\!\!\left\{\! {\frac{{\gamma'({\tilde f_0^{k*}}) \!\cdot \!\ln \!10}}{{20{{\tilde f_0^{k*}}}}} \!\cdot \! {{10}^{\frac{{\chi (R_{k,{{th}}}^{*})}}{{20}}}}} \!\right\}} .\!
\end{align}
\end{appendices}

\end{document}